\documentclass[11pt,oneside,english]{amsart}
\usepackage[T1]{fontenc}
\usepackage[latin9]{inputenc}
\usepackage[a4paper]{geometry}
\geometry{verbose,tmargin=2.5cm,bmargin=2.5cm,lmargin=2.5cm,rmargin=2.5cm,headheight=1.25cm,headsep=1.25cm,footskip=1.25cm}
\usepackage{babel}
\usepackage{verbatim}
\usepackage{url}
\usepackage{amsthm}
\usepackage{graphicx}
\usepackage{cite}
\usepackage[unicode=true,pdfusetitle,
 bookmarks=true,bookmarksnumbered=false,bookmarksopen=false,
 breaklinks=false,pdfborder={0 0 0},backref=false,colorlinks=false]
 {hyperref}
\usepackage{cite}

\makeatletter

\providecommand{\tabularnewline}{\\}

\numberwithin{equation}{section}
\numberwithin{figure}{section}
  \theoremstyle{definition}
  \newtheorem{defn}{\protect\definitionname}
  \theoremstyle{plain}
  \newtheorem*{thm*}{\protect\theoremname}

\makeatother

  \providecommand{\definitionname}{Definition}
  \providecommand{\theoremname}{Theorem}

\begin{document}

\title{On Fair Size-Based Scheduling}

\author{Matteo Dell'Amico, Damiano Carra, and Pietro Michiardi}

\begin{abstract}
By executing jobs serially rather than in parallel, size-based
scheduling policies can shorten time needed to complete jobs; however,
major obstacles to their applicability are fairness guarantees and the
fact that job sizes are rarely known exactly a-priori. Here, we
introduce the $\mathrm{Pri}$ family of size-based scheduling policies;
$\mathrm{Pri}$ simulates \emph{any} reference scheduler and executes
jobs in the order of their simulated completion: we show that these
schedulers give strong fairness guarantees, since \emph{no job
  completes later in $\mathrm{Pri}$ than in the reference policy}. In
addition, we introduce PSBS, a practical implementation of such a
scheduler: it works online (i.e., without needing knowledge of jobs submitted
in the future), it has an efficient $O(\log n)$ implementation and it
allows setting priorities to jobs. Most importantly, unlike earlier
size-based policies, the performance of PSBS degrades gracefully with
errors, leading to performances that are close to optimal in a variety
of realistic use cases.
\end{abstract}

\thanks{Matteo Dell'Amico is with Symantec Research Labs; this work
  was done while he was at EURECOM. Email:
  matteo\_dellamico@symantec.com. Phone: +33~4~93~00~82~61. Address:
  Symantec Research Labs at EURECOM, Campus SophiaTech, 450 Route des
  Chappes, 06410 Biot, France.}

\thanks{Damiano Carra is with University of Verona.  Email:
  damiano.carra@univr.it. Phone: +39~045~802~7059. Address: Strada Le
  Grazie 15, Verona, Italy.}

\thanks{Pietro Michiardi is with EURECOM. Email:
  pietro.michiardi@eurecom.fr. Phone: +33~4~93~00~81~45. Address:
  Campus SophiaTech, 450 Route des Chappes, 06410 Biot, France.}

\maketitle

\newpage

\section{Introduction}

Schedulers are often based on \emph{fair sharing}, 
where the resources are divided
among jobs according to some fairness concept. The simplest case is
processor sharing (PS), which partitions the resources equally 
among pending jobs at every instant.
However, if users care about job completion time, 
rather than instantaneous job progression,
sharing is not optimal: this is shown in FSP \cite{
Friedman2003}, a scheduler that optimizes job completion times
while providing strong fairness guarantees. FSP dominates PS,
i.e., \emph{no job will complete later in FSP than in PS}, and it is 
based on a simple idea: schedule the job that would complete first 
in PS. Here, we discuss two issues arising from implementing a 
scheduler inspired by FSP in a practical context \cite{Pastorelli2013
}.  

First, what if our concept of fairness is more elaborate than simple
equal sharing? Many real-world schedulers have a flexibility which
goes even beyond that of priority classes: e.g., the Hadoop
capacity scheduler \cite{Zaharia2009} applies a hierarchical concept
of fair sharing to guarantee resources to units within an
organization. We thus introduce a generalization of FSP's dominance
result: given \emph{any} scheduler, simulating it and executing jobs
one at a time according to the order in which they would complete
dominates the scheduler itself. Therefore, \emph{any} scheduler can
be used as a reference for fairness, and executing jobs serially is
always beneficial.

Second, what happens when job size is only known approximately?
Indeed, in practical settings, job sizes are rarely known a-priori. 
We thus introduce our work on scheduling based on inexact sizes and 
show that, if a scheduler has been designed without considering that
the information about job size may be inaccurate, estimation errors 
may have dramatic impact on the performance for different 
workload characteristics. On the other hand, if
the consequences of the estimation errors are  
properly addressed -- as we do in our proposal,
PSBS \cite{DellAmico2014} -- the scheduler performs close to 
optimally in a variety of workloads. PSBS is efficient
(its complexity is $O\left(\log n\right)$ compared to $O\left(n\right)$
of FSP), and allows 
job priorities.

We conclude by highlighting open questions and future research directions 
related to scheduling with inexact job sizes.

\section{\label{sec:Analytical-Results}Dominance Results With Known Job Sizes}

We consider here the single-machine scheduling problem with release
times and preemption; our goal, that materializes in the $\mathrm{Pri
}$ scheduler, is to minimize the sum of 
completion times (according to Graham et al. \cite{
Graham1979}, the $1|r_{i};pmtn|\sum C_{i}$ problem) with the 
additional dominance requirement: no job should complete later than 
in a scheduler which is taken as a reference for fairness. Without 
this limitation, the optimal solution is the Shortest Remaining 
Processing Time (SRPT) policy. We call \emph{
schedule }a function $\omega\left(i,t\right)$
that outputs the fraction of system resources allocated to job $i$
at time $t$. For example, for the processor-sharing (PS) scheduler,
when $n$ jobs are \emph{pending }(released and not yet completed),
$\omega\left(i,t\right)=\frac{1}{n}$ if job $i$ is pending and $0$
otherwise. Furthermore, we call $C_{i,\omega}$ the completion time
of job $i$ under schedule $\omega$.
\begin{defn}
Schedule $\omega$ \emph{dominates} schedule $\omega'$ if $C_{i,\omega}\leq C_{i,\omega'}$
for each job $i$.
\end{defn}
Our scheduler prioritizes jobs according to the order in which they
complete in $\omega$.
\begin{defn}
A \emph{completion sequence $S=\left[s_{1},\ldots,s_{n}\right]$ }is
an ordering of the jobs to be scheduled. A schedule $\omega$ \emph{has
completion sequence} $S$ if $C_{s_{i},\omega}\leq C_{s_{j},\omega}\forall i<j$.
\end{defn}

\begin{defn}
For a completion sequence $S$, the $\mathrm{Pri}_{S}$ schedule is
such that $\mathrm{Pri}_{S}\left(i,t\right)=1$ if $i$ is the first
pending job to appear in $S$; $\mathrm{Pri}_{S}\left(i,t\right)=0$
otherwise.
\end{defn}
We now show that scheduling jobs in the order in which they complete
under $\omega$ dominates $\omega$.
\begin{thm*}
$\mathrm{Pri}_{S}$ dominates any schedule with completion sequence
$S$.\end{thm*}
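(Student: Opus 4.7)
The plan is to prove, for every index $k$, that $C_{s_k,\mathrm{Pri}_S}\le C_{s_k,\omega}$. Fix $k$, set $J_k=\{s_1,\ldots,s_k\}$, and let $t^{\star}:=C_{s_k,\omega}$. Because $\omega$ has completion sequence $S$, every job of $J_k$ is completed under $\omega$ by time $t^{\star}$, so $\omega$ has devoted exactly $\sum_{i\le k}\mathrm{size}(s_i)$ units of processing to $J_k$ on $[0,t^{\star}]$. It therefore suffices to show that $\mathrm{Pri}_S$ has processed at least that same amount of work on $J_k$ by $t^{\star}$, since that forces $\mathrm{Pri}_S$ to have finished every job of $J_k$ — and in particular $s_k$ — by then.

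The key structural observation is that the $k$ jobs of $J_k$ occupy the first $k$ positions of $S$, while $\mathrm{Pri}_S$ always serves the first pending job in $S$; hence whenever any $J_k$-job is pending, $\mathrm{Pri}_S$ is processing some job of $J_k$ at the full rate $1$, whereas $\omega$ can devote at most rate $1$ to $J_k$ at any instant. I would then introduce the cumulative-work functions $W^{\pi}_{J_k}(t)$ (total processing that schedule $\pi\in\{\mathrm{Pri}_S,\omega\}$ has spent on $J_k$ up to time $t$) and prove the invariant $W^{\mathrm{Pri}_S}_{J_k}(t)\ge W^{\omega}_{J_k}(t)$ for all $t\ge 0$. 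Both functions are continuous, nondecreasing, and bounded above by $R(t)$, the total size of $J_k$-jobs released by time $t$. Assuming the invariant fails, let $\tau$ be the infimum of violation times; by continuity $W^{\mathrm{Pri}_S}_{J_k}(\tau)=W^{\omega}_{J_k}(\tau)$. For $t$ slightly greater than $\tau$ the strict inequality $W^{\mathrm{Pri}_S}_{J_k}(t)<W^{\omega}_{J_k}(t)\le R(t)$ forces some $J_k$-job to be pending under $\mathrm{Pri}_S$, so by the structural observation $W^{\mathrm{Pri}_S}_{J_k}$ grows at rate $1$ in a right neighbourhood of $\tau$, while $W^{\omega}_{J_k}$ grows at rate at most $1$. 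Consequently the gap $W^{\omega}_{J_k}-W^{\mathrm{Pri}_S}_{J_k}$ cannot rise above $0$ just to the right of $\tau$, contradicting the definition of $\tau$.

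Evaluating the invariant at $t=t^{\star}$ yields $W^{\mathrm{Pri}_S}_{J_k}(t^{\star})\ge\sum_{i\le k}\mathrm{size}(s_i)$, so $\mathrm{Pri}_S$ has completed every job of $J_k$ by $t^{\star}$, giving $C_{s_k,\mathrm{Pri}_S}\le t^{\star}=C_{s_k,\omega}$. The delicate step I expect to need the most care is the continuity argument at $\tau$: in full generality a schedule may change rates on measure-zero sets, so one should either work with right-continuous representatives or restrict to piecewise-constant schedules, as is standard in this literature. The real engine of the proof, however, is the clean combinatorial fact that pending mass in the $J_k$-prefix of $S$ obliges $\mathrm{Pri}_S$ to run on $J_k$ at full rate — this is what lets $\mathrm{Pri}_S$ always keep up with $\omega$ on every prefix of the completion sequence.
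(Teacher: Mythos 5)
Your proof is correct, but it takes a genuinely different route from the paper's. The paper sandwiches the completion time through the \emph{minimal makespan}: for $i=s_j$ it lets $M$ be the optimal makespan of the prefix set $S_{\leq j}=\{s_1,\ldots,s_j\}$, observes that $\mathrm{Pri}_S$ achieves $M$ because it devotes all resources to $S_{\leq j}$ whenever any of its jobs is pending --- this optimality fact for $1|r_i;pmtn|C_{\max}$ is outsourced to a citation (Liu, 1973) --- and notes that $C_{i,\omega}$ is itself the $\omega$-makespan of $S_{\leq j}$, giving $C_{i,\mathrm{Pri}_S}\leq M\leq C_{i,\omega}$. You never introduce $M$ or invoke any external optimality result: you compare $\mathrm{Pri}_S$ directly against $\omega$ via the cumulative-work invariant $W^{\mathrm{Pri}_S}_{J_k}(t)\geq W^{\omega}_{J_k}(t)$, proved by a continuity/infimum argument from the same structural fact (pending prefix work forces $\mathrm{Pri}_S$ to run on $J_k$ at rate $1$, while $\omega$ runs at rate at most $1$). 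What each buys: the paper's argument is shorter and modular, since the classical makespan result absorbs exactly the work-conservation reasoning you carry out by hand; yours is fully self-contained, and the pointwise invariant you establish is in fact slightly stronger information than the endpoint comparison the theorem needs. One small point of rigor in your sketch: at violation times just past $\tau$ you get pendingness \emph{at those times}, not automatically throughout a right neighbourhood, so the clean finish is to note that the gap $g=W^{\omega}_{J_k}-W^{\mathrm{Pri}_S}_{J_k}$ is absolutely continuous with $g'\leq 0$ almost everywhere on the set $\{g>0\}$, whence $g$ cannot cross from $0$ to positive values; you flag essentially this issue yourself, so I count it as a matter of polish rather than a gap.
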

\begin{proof}
We have to show that $C_{i,\mathrm{Pri}_{S}}\leq C_{i,\omega}$ for
each job $i$ and any schedule $\omega$ with completion sequence
$S$. Let $j$ be the position of $j$ in $S$ (i.e., $i=s_{j}$); we call $M$ the minimal makespan of the $S_{\leq j}=\left\{ s_{1},\ldots,s_{j}\right\} $
set of jobs,%
\footnote{The \emph{makespan} of a set of jobs is the maximum among their completion
times, therefore $M=\min_{\omega\in\Omega}\max_{i\in\left\{ 1,\ldots,j\right\} }C_{S_{i},\omega}$
where $\Omega$ is the set of all possible schedules.%
} and we show that $C_{i,\mathrm{Pri}_{S}}\leq M$ and $M\leq C_{i,\omega}$:
\begin{itemize}
\item $C_{i,\mathrm{Pri}_{S}}\leq M$: minimizing the makespan of $S_{\leq j}$
is equivalent to solving the $1|r_{i};pmtn|C_{\max}$ problem applied
to the jobs in $S_{\leq j}$: this is guaranteed if all resources
are assigned to jobs in $S_{\leq j}$ as long as any of them are pending
\cite{Liu1973}. $\mathrm{Pri}_{S}$ guarantees this, hence the makespan
of $S_{\leq j}$ using $\mathrm{Pri}_{S}$ is $M$. Since $i\in S_{\leq j}$,
$C_{i,\mathrm{Pri}_{S}}\leq M$.
\item $M\leq C_{i,\omega}$ follows trivially from the fact that $\omega$
has completion sequence $S$ and, therefore, $C_{i,\omega}$ is the
makespan for $S_{\leq j}$ using schedule $\omega$.
\qedhere
\end{itemize}
\end{proof}
This theorem generalizes the results by Friedman and Henderson
 \cite{Friedman2003}: FSP follows from applying $\mathrm{Pri}_{S}$
to the completion sequence of PS.
The generalization is important: in practice, one can define 
a scheduler that provides a desired type of fairness, and then optimize
the performance in terms of completion time by applying the $\mathrm{Pri}_{S}$
scheduler. For instance, assume that the system deals with different 
classes of jobs that have different weights, and the scheduler to apply 
to provide fairness is the discriminatory processor sharing (DPS): the 
theorem guarantees that $\mathrm{Pri}_{S}$ dominates DPS. We have exploited
exactly this results in our scheduler PSBS \cite{DellAmico2014}, which, 
in the absence of errors, dominates DPS.
Note that both FSP and
PSBS are appliable \emph{online}: even without information on
future jobs, it is possible to compute which pending job completes
first in PS and DPS and hence decide which job to schedule.

\section{\label{sec:Scheduling-With-Uncertain}Scheduling With Approximated
Sizes}

The results above seem to suggest that size-based schedulers should
be employed ubiquitously; however, a major obstacle to their applicability
is that, in a large majority of cases, job size cannot be known exactly;
on the other hand, it is often possible to compute an \emph{estimate}. 
In this Section, we synthetize our results
on the topic \cite{dell2014revisiting,DellAmico2014}.%

Only a few other works \cite{lu2004size,wierman2008scheduling} tackle
the problem of scheduling with inaccurate sizes; they show rather pessimistic
results, suggesting that size-based schedulers outperform non
size-based counterparts only when estimations are precise. We have
complemented those works with an extensive simulative study, generating
synthetic workloads with several varying parameters related to the job
size distribution and the error distribution. To help reproducibility,
our simulator is available as free software.%
\footnote{\url{http://github.com/bigfootproject/schedsim}}
We have found that job size distribution plays an essential role:
if job sizes are skewed (i.e., a few very large jobs make up a large
fraction of the total work), size estimation errors cause serious
performance issues in existing size-based schedulers. This phenomenon
is mainly caused by the fact that, if the size of a large job is 
underestimated, it gains positions in the queue; when it enters in 
the service, it reaches a point when it cannot be preempted and it blocks the server
until it has completed (at the expense of jobs that are actually small).
The opposite situation, a job with overestimated size, instead, has little
impact on the other jobs (see \cite{DellAmico2014} for an illustrative example).

Our proposal, PSBS, leverages these last observations. The main idea is
to react when the system detects that a job has been underestimated, 
i.e., when a job is taking more resources than the ones initially estimated --
we call these jobs \emph{late}. The scheduler treats the late jobs differently, 
and lets other jobs be served, so that the impact of the underestimation 
is limited.

This, coupled with the fact that PSBS schedules jobs one at a time 
according to their completion time when simulating DPS, 
allows PSBS to obtain close to optimal performance, yet guaranteeing 
the same fairness of the simulated scheduling (DPS). 
PSBS has been inspired by FSP but, unlike FSP, it allows setting priorities
and it deals with estimation errors. Moreover, PSBS
is efficient, since its complexity is $O\left(\log n\right)$, compared to
$O\left(n\right)$ of FSP. Note that, by properly configuring its parameters and 
with no estimation errors, PSBS behaves as FSP, therefore our efficient 
implementation represents a gain with respect to FSP.

\begin{figure}
\begin{centering}
\begin{tabular}{ccc}
\includegraphics[width=0.3\textwidth]{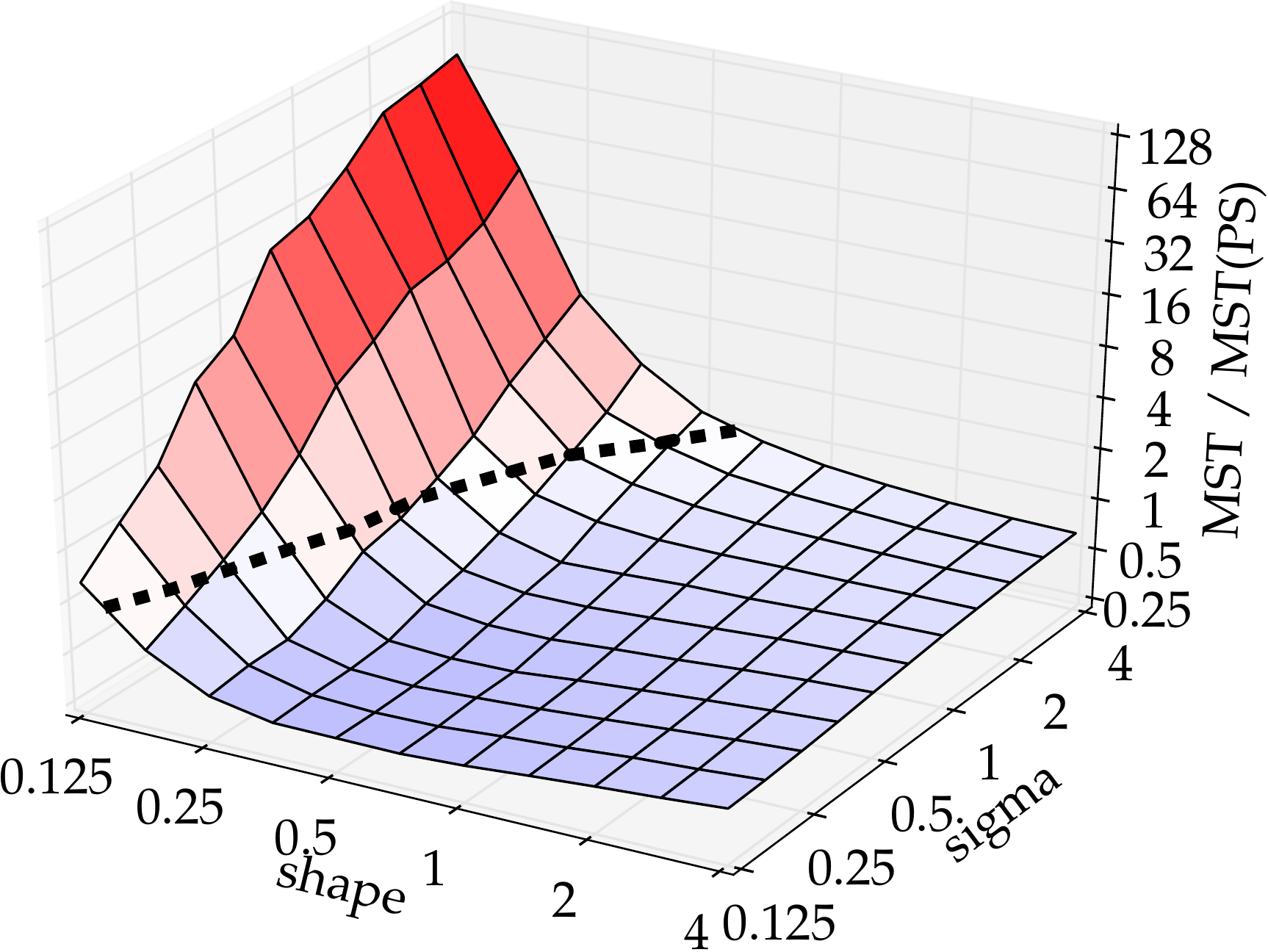} & \includegraphics[width=0.3\textwidth]{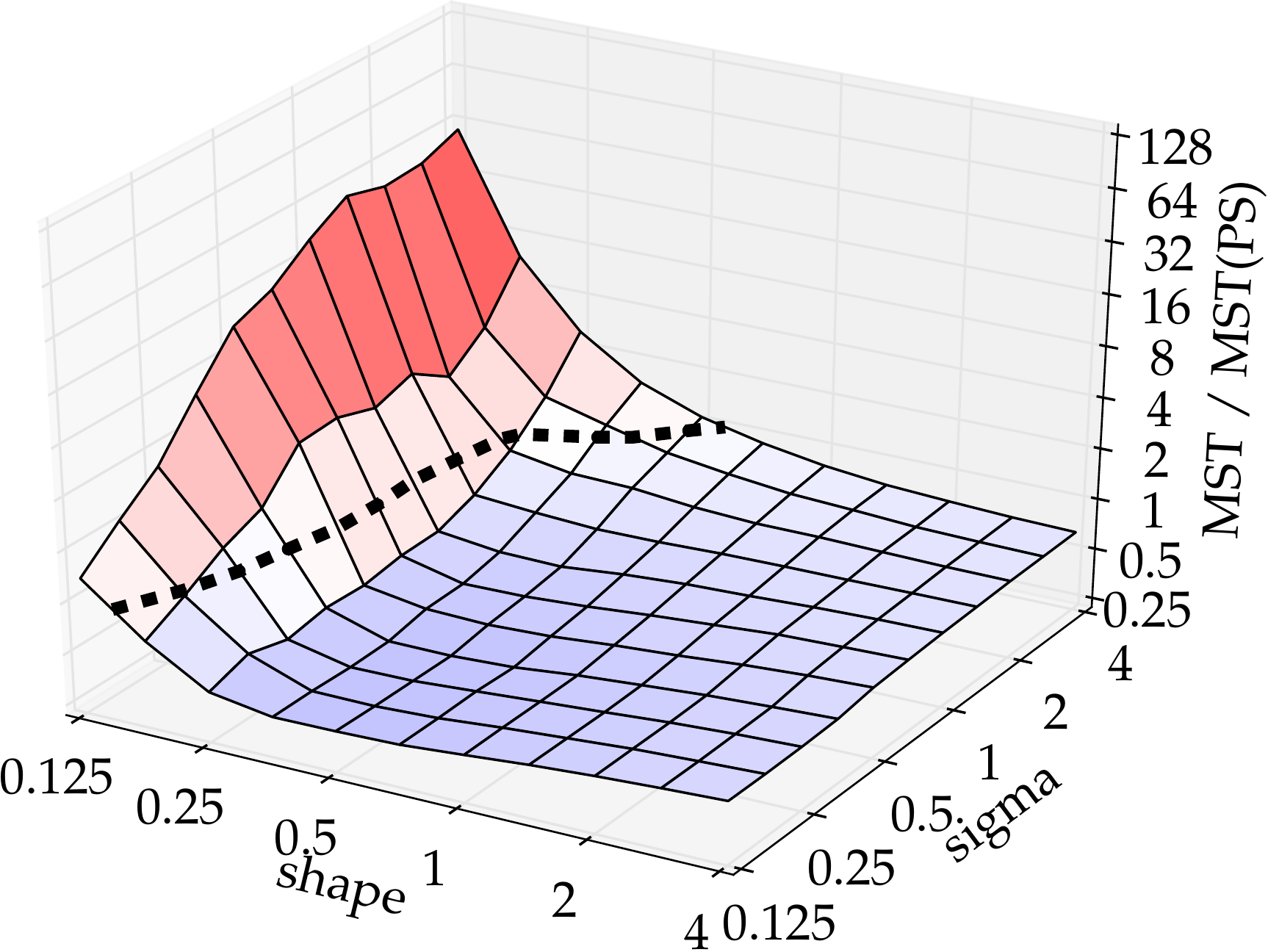} & \includegraphics[width=0.3\textwidth]{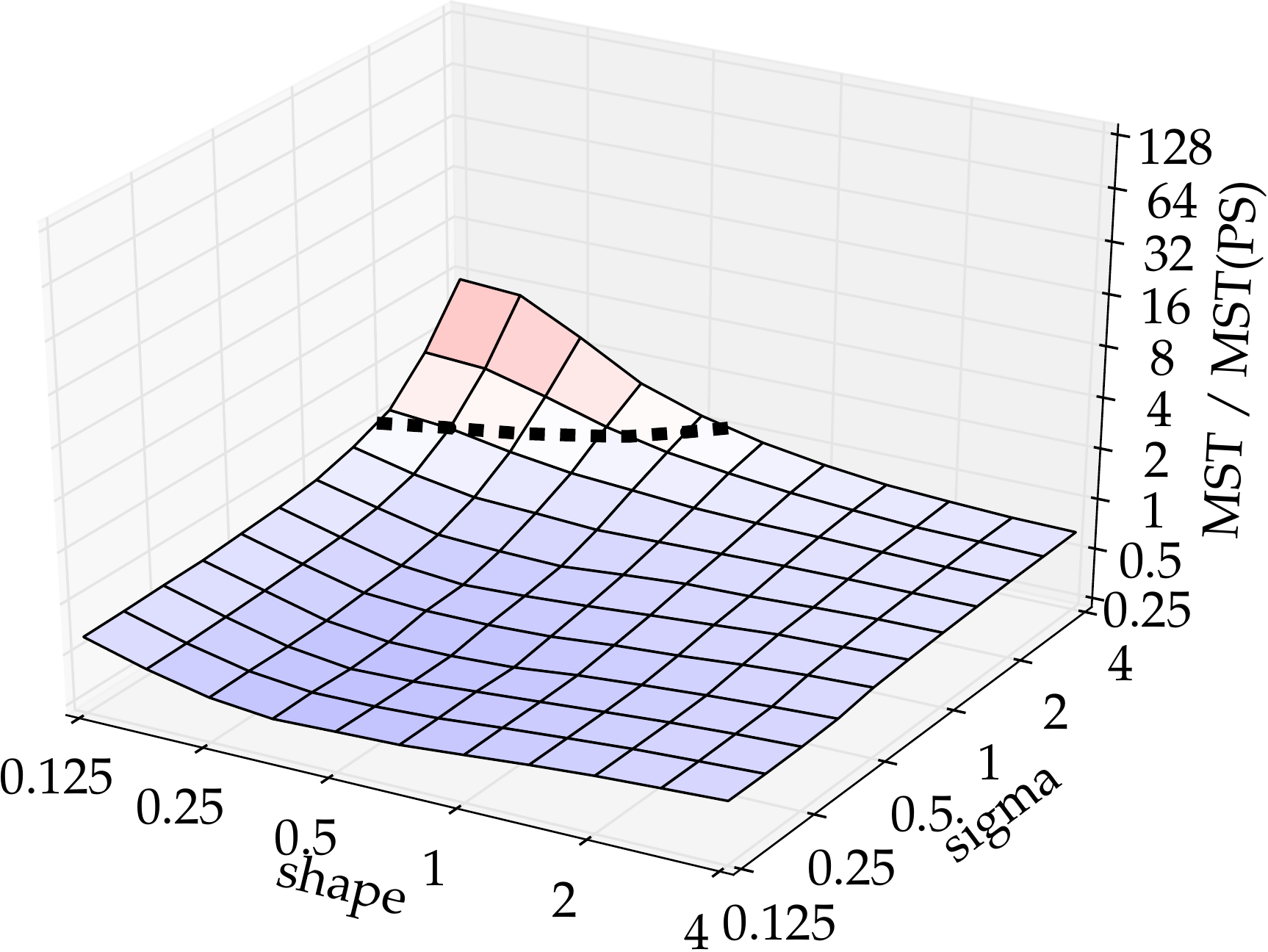}\tabularnewline
SRPT & FSP & PSBS\tabularnewline
\end{tabular}
\par\end{centering}

\caption{\label{fig:Mean-sojourn-time}Mean sojourn time using size-based schedulers,
normalized against PS.}
\end{figure}

In Figure \ref{fig:Mean-sojourn-time}, we show the ratio between
the mean sojourn time (MST)%
\footnote{A job's \emph{sojourn time} is the interval between its release
and completion; minimizing MST also minimizes $\sum C_{i}$.%
} obtained using size-based schedulers, such as SRPT and FSP, 
normalized against the MST
of PS. Job priorities are homogeneous and sizes are generated according
to a Weibull distribution (heavy-tailed for shape < 1, light-tailed
otherwise); the relative job size estimation error is distributed
according to a log-normal distribution -- i.e., higher values of sigma
yield larger errors. Job inter-arrival times are distributed exponentially
and the load (ratio between arrival and service rate) is set
to 0.9.

The results of Figure \ref{fig:Mean-sojourn-time} show that SRPT and FSP
suffer when the workload is highly skewed, even with moderate estimation errors; 
on the other hand, PSBS largely corrects this issue,
and it is outperformed by PS only in extreme cases 
where the workload is very skewed \emph{and} job size estimation is 
very imprecise (sigma greater than 2, which corresponds to a correlation 
coefficient between the job size and its estimate less than 0.15). 
In fact, PSBS performs close to optimally in most cases; similar results
are obtained when playing back real workloads on the simulator. Other
parameters such as load and job inter-arrival times do not have a
large impact on the results.

\section{\label{sec:Conclusion}Conclusion}

We have shown two main results. For the first time in this paper,
we have generalized the dominance results of FSP over PS to \emph{any} ``simulated'' scheduler. We also have synthetized our results on scheduling based on approximate job sizes: PSBS largely mitigates the problem of existing policies, and often obtains close to optimal results, yet maintaining the desired fairness among jobs.
We consider that these results can be interesting both for theorists
and for practitioners. In practical systems, the PSBS policy can be used as a basis to build efficient size-based schedulers, as our work for Hadoop \cite{Pastorelli2013} demonstrates.



With respect to theory, we identify a problem that may be of 
interest to the community. We have shown that a scheduler 
such as PSBS can perform well in a variety of realistic use cases; 
we are able to explain these results with intuition and substantiate 
them with numerical experiments. However, 
an analytical characterization of the problem may provide more
insights on the situations where such a strategy performs 
well, and a way to predict scheduler performance in function of the 
workload characteristics.
Such a modeling approach could be useful to go beyond PSBS.
We speculate that job size information -- even if
approximated -- is 
better than no information, and hence we
conjecture that it is possible to design a scheduler that \emph{always} 
outperforms non size-based counterparts, when the distribution of estimation
errors is known.

\bibliographystyle{abbrv}
\bibliography{references}

\end{document}